\newtheorem{theorem}{Theorem}
\newtheorem{lemma}{Lemma}
\newtheorem{proposition}{Proposition}
\definecolor{darkgreen}{rgb}{0, 0.5, 0}
\definecolor{darkred}{RGB}{128, 0, 0}
\newcommand{\bc}{\mathbf}
\begin{document}

\title{Zero-Error Sum Modulo Two \\with a Common Observation}
\author{Milad Sefidgaran and Aslan Tchamkerten\\Telecom Paris, Institut Polytechnique de Paris\\ \{milad.sefidgaran,aslan.tchamkerten\}@telecom-paris.fr}
\maketitle
\begin{abstract}
	This paper investigates the classical modulo two sum problem in source coding, but with a common observation:  a transmitter observes $(X,Z)$, the other transmitter observes $(Y,Z)$, and the receiver wants to compute $X \oplus Y$ without error. Through a  coupling argument, this paper establishes a new lower bound on the sum-rate when $X-Z-Y
	$ forms a Markov chain. 
\end{abstract} 

\section{introduction}
 The problem of computing the modulo two sum of binary $X$ and $Y$ observed at different transmitters was introduced by K\"orner and Marton in 1979 \cite{KornMar79}. Under the  vanishing error probability criterion, they showed the optimality of a class of linear codes when the sources have symmetric probability distributions. This implied that the optimal sum-rate for computing the sum modulo two function can be less than the optimal sum-rate for the lossless recovery of the sources as characterized by Slepian and Wolf in 1973 \cite{SlepWol73}. Later, K\"orner showed that for general sources the rate region corresponds to Slepian-Wolf's whenever $H(X\oplus Y) \geq \min (H(X),H(Y))$ (see \cite[Exercise~16.23]{CsisKor82} as this result refers to an unpublished reference).
 
 In 1983, Ahlswede and Han  \cite{AhlsHan83} used a combination of the schemes of Slepian-Wolf and K\"orner-Marton to show that it was possible to improve over the convex hull of these schemes. More recent work \cite{SefidgarnGohari15} suggests that the Ahlswede-Han scheme cannot achieve a sum-rate lower than the minimum of the sum-rates by Slepian-Wolf and K\"orner-Marton.
 
Till recently, there was no better bound than the cut-set bound $H(Y|X)+H(X|Y)$. In \cite{SefidgarnGohari15}, it was shown that when $H(X\oplus Y) < H(X|Y)+H(Y|X)$, the cut-set bound is not tight except for the cases of independent sources or sources with symmetric distribution. Recently, Nair and Wang \cite{NairWan20} established a new lower bound on the weighted sum-rate of the transmitters  which implied the optimality of the Slepian-Wolf scheme under some previously unknown conditions. Moreover, they provided sufficient conditions under which linear codes are ``weighted-sum optimal.''

Finding a better lower bound than the cut-set bound  amounts to quantify the potential penalty due to distributed processing. We note here that for the case of arithmetic sum under the zero-error performance criterion there exist better bounds than the cut-set bound as reported in \cite{MattaOster05,OrdShay15,TripRam16,TripRam18}.

\begin{figure}
\centering
\begin{tikzpicture}[scale=1]
\draw [->][thick](0.2,0.1)--(1.8,.9);
\draw [->][thick](0.2,1.9)--(1.8,1.1);
\draw [black,fill=black] (0,0) circle [radius=0.05];
\draw [black,fill=black] (0,2) circle [radius=0.05];
\draw [black,fill=black] (2,1) circle [radius=0.05];
\node [below left] at (0,0) {$(Y,Z)$};
\node [above left] at (0,2) {$(X,Z)$};
\node [right] at (2.2,1) {$X\oplus Y$};
\node [above right] at (.8,1.5) {$R_1$};
\node [below right] at (.8,.5) {$R_2$};
\end{tikzpicture}
\caption{Sum modulo two with a common observation.}
\label{fig:sumModTwoCommon}
\end{figure}
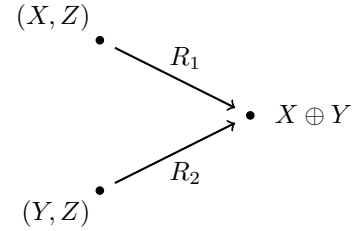
\begin{figure*}
    \begin{center}
     \includestandalone[scale=0.7]{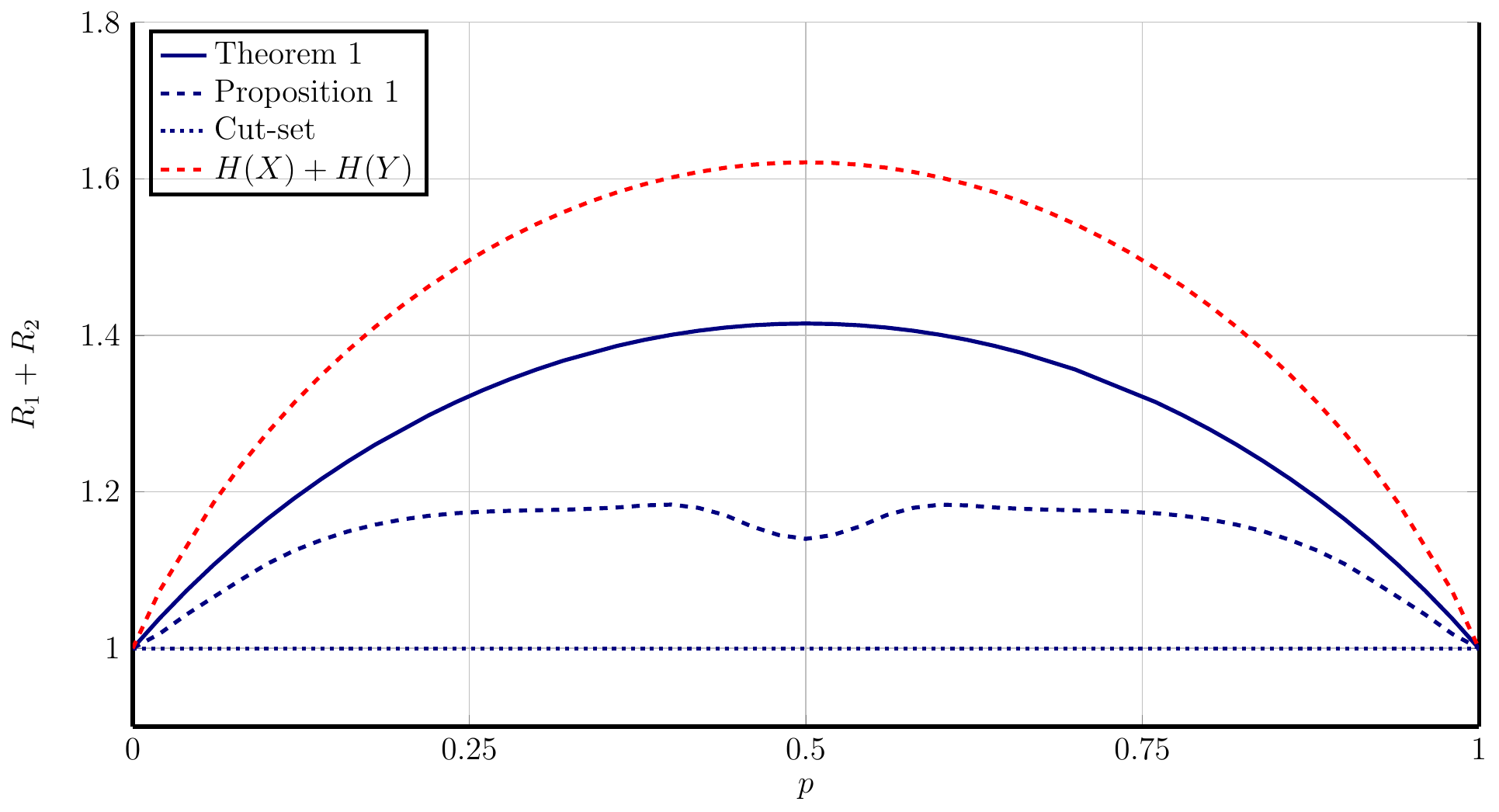}
        \end{center} 
        \caption{Comparison of different bounds for probability distribution
    \eqref{eq:probDist}.}
    \label{fig:exampleExtremeCase} 
\end{figure*}

In this paper, we consider the following variant of the sum modulo two problem for which we improve upon the cut-set bound. The transmitters have access to binary sources $(X,Z)$ and $(Y,Z)$, respectively, and use variable-length coding to send their information to the receiver who tries to compute $X \oplus Y$ without error. The sources are assumed to satisfy the Markov chain $X-Z-Y$.  We establish a lower bound on the optimal sum-rate of this problem, which improves over the cut-set bound and over an extension of a bound by Nair and Wang \cite{NairWan20} for this setup.

The rest of the paper is organized as follows.  Section~\ref{sec:preliminaries} contains preliminaries and the problem formulation. Section~\ref{sec:Result} presents the main result for a particular probability distribution of the sources. This result is proved in Section~\ref{sec:proofs}. Finally, Section ~\ref{sec:general} states the main result for arbitrary binary sources that satisfy $X-Z-Y$, and provides a proof sketch.

\section{Preliminaries} \label{sec:preliminaries}
Let $({\mathbf{x,z,y}})$ be $n$ realizations of finite support random variables $X-Z-Y$ supposed to form a Markov chain.  We consider the setup depicted in Fig.~ \ref{fig:sumModTwoCommon}.  Transmitter $1$ assigns an index $m_1({\mathbf{x,z}})\in \{1,
\ldots, L_1\}$ to its observation $({\mathbf{x,z}})$ and transmits this index to the receiver using a uniquely decodable variable length code $\mathcal{C}_1 \subseteq \{0,1\}^*$. Transmitter $2$ proceeds similarly, and assigns an index $m_2({\mathbf{y,z}})\in \{1,
\ldots, L_2\}$ to its observation $({\mathbf{y,z}})$ and transmits this index to the receiver using a uniquely decodable variable length code $\mathcal{C}_2 \subseteq \{0,1\}^*$.
Upon receiving $(m_1,m_2)$, the receiver attempts to recover the component-wise modulo two sum $\bc{x}\oplus \bc{y}$ using a decoder $\mathcal{D}(m_1,m_2) $.
The encoding procedures (indices assignment and codebooks) and the decoder are referred to as a scheme~$\mathcal{S}$. 

The sum-rate $R_1+R_2$ is said to be achievable if, for any  $\varepsilon>0$ and all $n$ sufficiently large, there exists a scheme $\mathcal{S}$  such that
\begin{align*}
    P\left( \mathcal{D}\left(M_1,M_2 \right) \neq \bc{X}\oplus \bc{Y} \right)=0
\end{align*}
and such that 
\begin{align*}
   n (R_1+R_2+\varepsilon) \geq \mathbb{E}\left[ l(c_1(M_1))+l(c_2(M_2))\right],
\end{align*}
where $l(c_i(M_i))$ denotes the length of the $M_i$-th codeword of transmitter $i$. 

A first lower bound is the standard cut-set bound
\begin{align}
    R_1+R_2 \geq& \max \left( H(X|Z)+H(Y|Z),H(X\oplus Y) \right).  \label{eq:cutSet}
\end{align}
A second lower bound is obtained through a straightforward extension of \cite[Theorem~4]{NairWan20}, originally derived under the  vanishing probability of error criterion:
\begin{proposition} \label{th:nairExtension}The sum-rate is lower-bounded as
	\begin{IEEEeqnarray*} {rll}
		R_1+R_2	\geq&  H(X|&Z)+  H(Y|Z) + H(Z)+ \\
		&\max \bigg(&\min \limits_{U-(X,Z)-Y} H(X \oplus Y|U)-H(Y,Z|U),\\
		&&\min \limits_{V-(Y,Z)-X} H(X \oplus Y|V)-H(X,Z|V)\bigg).
	\end{IEEEeqnarray*}
\end{proposition}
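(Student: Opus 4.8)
The plan is to recast the problem as an instance of the modulo-sum (K\"orner--Marton) problem over the finite abelian group $(\{0,1\}^2,\oplus)$ and then to run the converse of \cite[Theorem~4]{NairWan20} essentially verbatim. Set $\widehat X\coloneqq(X,Z)$, $\widehat Y\coloneqq(Y,Z)$, and correspondingly $\widehat{\bc X}\coloneqq(\bc X,\bc Z)$, $\widehat{\bc Y}\coloneqq(\bc Y,\bc Z)$. Componentwise over $\mathbb F_2$ we have $\widehat{\bc X}\oplus\widehat{\bc Y}=(\bc X\oplus\bc Y,\bc 0)$, so recovering $\bc X\oplus\bc Y$ without error is \emph{equivalent} to recovering the group-sum $\widehat{\bc X}\oplus\widehat{\bc Y}$. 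Moreover $Z$ is a common observation---a deterministic function of each of $\widehat X$ and $\widehat Y$---and the hypothesis $X-Z-Y$ reads $\widehat X-Z-\widehat Y$. Hence any scheme $\mathcal S$ for the original problem is a zero-error scheme for the modulo-sum problem with sources $(\widehat X,\widehat Y)$ and common part $Z$.

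First I would remove the variable-length aspect: since $\mathcal C_1,\mathcal C_2$ are uniquely decodable, McMillan's inequality gives $\mathbb E[l(c_i(M_i))]\geq H(M_i)$, so that
\begin{align*}
n(R_1+R_2+\varepsilon)\ \geq\ H(M_1)+H(M_2).
\end{align*}
It therefore suffices to lower bound $\frac1n\big(H(M_1)+H(M_2)\big)$ under the single constraint that $\widehat{\bc X}\oplus\widehat{\bc Y}=\mathcal D(M_1,M_2)$ be a deterministic function of $(M_1,M_2)$. This is exactly the quantity bounded in \cite[Theorem~4]{NairWan20}, where the analogous decoding requirement holds only up to a vanishing error probability and is handled via Fano's inequality; here the corresponding term is identically $0$, so the derivation only becomes tighter, and the resulting single-letter bound does not depend on $n$.

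Next I would transcribe the single-letterization of \cite[Theorem~4]{NairWan20} with two bookkeeping adjustments. (i) That converse is written for binary sources, but each step---the chain-rule expansion of $H(M_1)+H(M_2)$, the introduction of the prefix-type auxiliary (a function of $M_1$ together with a prefix of $\widehat{\bc X}$ and a suffix of $\widehat{\bc Y}$ for the first branch, and its mirror image for the second), the use of the group operation to produce the cross terms $H(\widehat X_i\oplus\widehat Y_i|U_i)$, and the time-sharing step---carries over unchanged to $(\{0,1\}^2,\oplus)$; one verifies in particular that the single-letter auxiliary satisfies $U-\widehat X-\widehat Y$, i.e.\ $U-(X,Z)-Y$ (resp.\ $V-(Y,Z)-X$ in the mirrored branch), using that $\widehat X_i=(X_i,Z_i)$ determines $Z_i$ and that $X_i-Z_i-Y_i$ is Markov. (ii) The common part $Z$, being observed by both encoders, is counted once rather than twice: carried through the expansion it turns the ``$H(\widehat X)+H(\widehat Y)$'' that would appear in the absence of a common observation into $H(\widehat X)+H(\widehat Y)-H(Z)=H(X|Z)+H(Y|Z)+H(Z)$, while the auxiliary-dependent term becomes $H(\widehat X\oplus\widehat Y|U)-H(\widehat Y|U)=H(X\oplus Y|U)-H(Y,Z|U)$ in the first branch and, symmetrically, $H(X\oplus Y|V)-H(X,Z|V)$ in the second. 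Taking the better of the two branches and letting $\varepsilon\to0$ yields the claimed inequality.

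The main obstacle is step~(i): faithfully reproducing the chain-rule bookkeeping of \cite[Theorem~4]{NairWan20}---pinning down the exact prefix auxiliary, verifying that the Markov chains $U-(X,Z)-Y$ and $V-(Y,Z)-X$ survive single-letterization, and threading $Z$ through so that it contributes precisely a $+H(Z)$ and the conditioning inside $H(Y,Z|U)$ (resp.\ $H(X,Z|V)$) and nothing else. Once the reformulation $\widehat X=(X,Z)$, $\widehat Y=(Y,Z)$ is in place this bookkeeping is routine, which is why the extension is ``straightforward''; a standard support-lemma argument then shows that the minima over $U$ and $V$ are attained.
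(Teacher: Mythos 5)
Your reduction---set $\widehat X=(X,Z)$, $\widehat Y=(Y,Z)$, note that computing $\bc{X}\oplus\bc{Y}$ without error is equivalent to computing the group sum $\widehat{\bc{X}}\oplus\widehat{\bc{Y}}$ over $(\{0,1\}^2,\oplus)$, dispose of the variable-length/zero-error mismatch via $\mathbb{E}[l(c_i(M_i))]\geq H(M_i)$ so that only $H(M_1)+H(M_2)$ needs bounding, and rerun the converse of \cite[Theorem~4]{NairWan20}---is precisely what the paper means by a ``straightforward extension''; the paper offers no proof beyond that remark, so your sketch is the same approach, correctly outlined. One caution on your step~(ii): no ad hoc ``count $Z$ once'' correction to the chain-rule bookkeeping should be needed, because the leading term of the Nair--Wang bound applied to the sources $(\widehat X,\widehat Y)$ is the joint entropy $H(\widehat X,\widehat Y)=H(X,Y,Z)$, which already equals $H(X|Z)+H(Y|Z)+H(Z)$ under the Markov chain $X-Z-Y$ (your $H(\widehat X)+H(\widehat Y)-H(Z)$ coincides with this only by virtue of that same Markov chain), so presenting the $-H(Z)$ as a modification of the converse rather than as an identity risks smuggling in an unjustified step.
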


\section{Main Result} \label{sec:Result}
The main result, Theorem~\ref{th:mainResultZero} below, provides a new lower bound on the sum-rate. For ease of exposition this result is first stated and proved for the particular distribution
\begin{IEEEeqnarray}{rcl}
	P_{X|Z}(x=0|z=0)&=&P_{Y|Z}(y=0|z=1)=\frac{1}{2},\nonumber \\
	P_{X|Z}(x=0|z=1)&=&P_{Y|Z}(y=0|z=0)=0. \label{eq:probDist}
\end{IEEEeqnarray}
The sum-rate lower bound for the arbitrary binary sources that satisfy $X-Z-Y$ is stated in Section~\ref{sec:general}.

\begin{theorem} \label{th:mainResultZero}
Suppose $(X,Y,Z)$  satisfies \eqref{eq:probDist}. Then,
\begin{align*}
    R_1+R_2& \geq   1+L^*(p),
\end{align*}
where
\begin{align*}
L^*(p) &\triangleq\min \limits_{d \leq 2p\bar{p}} L(p,d),\\
L(p,d) &\triangleq h_b(p)+\left[d-p\cdot h_b \left(\frac{d}{2p}\right)-\bar{p}\cdot h_b \left(\frac{d}{2\bar{p}}\right)\right],
    \end{align*}
where $p\triangleq P(Z=0)$, where $\bar{p}\triangleq 1-p$, and where $h_b(p)$ denotes the binary entropy $-p\log p-(1-p)\log (1-p)$.
\end{theorem}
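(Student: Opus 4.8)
The plan is to first reduce the sum‑rate bound to an upper bound on $H(\mathbf{Z}\mid M_1,M_2)$, and then to obtain that bound by a coupling argument.

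\textbf{Reduction.} I would work in the ``free‑bit'' coordinates: let $\mathbf{W}\in\{0,1\}^n$ be defined by $W_i\triangleq X_i$ if $Z_i=0$ and $W_i\triangleq Y_i$ if $Z_i=1$. Under \eqref{eq:probDist} one checks that $\mathbf{W}$ is uniform on $\{0,1\}^n$ and independent of $\mathbf{Z}$, that $\mathbf{X}\oplus\mathbf{Y}=\overline{\mathbf{W}}$, and that $(\mathbf{Z},\mathbf{W})$ determines $(\mathbf{X},\mathbf{Y})$ hence $(M_1,M_2)$; moreover zero‑error decoding makes $\mathbf{W}$ a function of $(M_1,M_2)$. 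Using Kraft's inequality and sub‑additivity, $n(R_1+R_2+\varepsilon)\ge H(M_1)+H(M_2)\ge H(M_1,M_2)$, and since $\mathbf{W}$ is a function of $(M_1,M_2)$,
\[
H(M_1,M_2)=H(\mathbf{W})+H(M_1,M_2\mid\mathbf{W})=n+H(\mathbf{Z}\mid\mathbf{W})+H(M_1,M_2\mid\mathbf{W},\mathbf{Z})-H(\mathbf{Z}\mid M_1,M_2),
\]
which equals $n+nh_b(p)-H(\mathbf{Z}\mid M_1,M_2)$. Letting $\varepsilon\to0$ gives $R_1+R_2\ge 1+h_b(p)-\tfrac1n H(\mathbf{Z}\mid M_1,M_2)$. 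Writing $\phi(d)\triangleq p\,h_b\!\left(\tfrac{d}{2p}\right)+\bar p\,h_b\!\left(\tfrac{d}{2\bar p}\right)-d$ and noting $h_b(p)-L^*(p)=\max_{0\le d\le 2p\bar p}\phi(d)$, it remains to prove $\tfrac1n H(\mathbf{Z}\mid M_1,M_2)\le\max_{0\le d\le 2p\bar p}\phi(d)$.

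\textbf{The coupling.} In this picture transmitter $1$ observes $(\mathbf{Z},\mathbf{W}_S)$ and transmitter $2$ observes $(\mathbf{Z},\mathbf{W}_{S^c})$ with $S\triangleq\{i:Z_i=0\}$, each encoder is injective in the free bits it sees (forced by zero error, since for fixed $\mathbf{z}$ any value of $\mathbf{X}\mid\mathbf{z}$ co‑occurs with any value of $\mathbf{Y}\mid\mathbf{z}$), and the decoder must recover $\mathbf{W}$. It follows that, conditioned on $(M_1,M_2)$, $\mathbf{Z}$ is distributed as its i.i.d.\ prior restricted to the set $C(M_1,M_2)$ of sequences consistent with the observed messages, so $H(\mathbf{Z}\mid M_1,M_2)=\mathbb{E}\,H\big(\mathbf{Z}\mid\mathbf{Z}\in C(M_1,M_2)\big)$. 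I would then introduce an independent copy $\mathbf{Z}'$ of $\mathbf{Z}$ and apply the (fixed) encoders to $(\mathbf{Z}',\mathbf{W})$ with the \emph{same} realization of $\mathbf{W}$: on the event $\{\mathbf{Z}'\in C(M_1,M_2)\}$ the pair $(\mathbf{Z}',\mathbf{W})$ produces exactly $(M_1,M_2)$, so zero error pins, at every coordinate $i$ where $\mathbf{Z}$ and $\mathbf{Z}'$ disagree, the bit $W_i$ both through $M_1$ and through $M_2$ — that bit lies in transmitter $1$'s view for one of $\mathbf{Z},\mathbf{Z}'$ and in transmitter $2$'s view for the other. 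Since two independent coordinates of $\mathbf{Z}$ disagree with probability exactly $2p\bar p$, and since each encoder is injective in its free bits, this rigidity should yield, for each coordinate $i$, a bound $H(Z_i\mid M_1,M_2)\le\phi(d_i)$ for some scheme‑dependent $d_i\in[0,2p\bar p]$. Summing over $i$, sub‑additivity of conditional entropy and concavity of $\phi$ then give $\tfrac1n H(\mathbf{Z}\mid M_1,M_2)\le\tfrac1n\sum_i\phi(d_i)\le\phi\!\big(\tfrac1n\sum_i d_i\big)\le\max_{0\le d\le 2p\bar p}\phi(d)$, as required.

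\textbf{Where the difficulty lies.} The hard part will be the per‑coordinate estimate $H(Z_i\mid M_1,M_2)\le\phi(d_i)$ together with the constraint $d_i\le2p\bar p$. The subtlety is that the free‑bit views of the two transmitters overlap \emph{precisely} on the coordinates where $\mathbf{Z}$ and $\mathbf{Z}'$ disagree, so the pinning of a ``transferred'' bit through $M_1$ and its pinning through $M_2$ are correlated; a crude product/union estimate, or replacing Shannon entropy by a collision‑probability (R\'enyi‑$2$) bound, gives only a strictly weaker exponent and does not reach $L^*(p)$. Recovering exactly $\phi$ seems to require a careful fiber‑by‑fiber accounting of the conditional law of $Z_i$ given $(M_1,M_2)$ — fixing a coordinate's line $\{(\mathbf{z}_{-i},0),(\mathbf{z}_{-i},1)\}$ and tracking whether its two endpoints stay merged under the coupling — and the auxiliary parameter $d$, over which one finally minimizes in $L^*(p)$, is exactly the ``amount of residual ambiguity'' this accounting produces and which is not controlled a priori. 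The remaining points (randomized encoders/decoders, passing to the limits $\varepsilon\to0$ and $n\to\infty$) are routine.
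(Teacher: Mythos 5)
Your reduction is correct and is in fact equivalent to the paper's opening step: passing to $\mathbf{W}$ and using that $\mathbf{W}$ is a function of $(M_1,M_2)$ while $(\mathbf{Z},\mathbf{W})$ determines $(M_1,M_2)$ reproduces exactly $R_1+R_2\ge H(X,Y,Z)-\tfrac1n H(\mathbf{Z}\mid M_1,M_2)-\varepsilon$, which the paper obtains via the identity $H(M_1,M_2\mid\mathbf{Z})=nH(X|Z)+nH(Y|Z)$ (your ``each encoder is injective in its free bits'' observation is precisely Lemma~\ref{lem:conditionalUniqueIndices}). The problem is that everything after that point --- the entire content of the theorem --- is deferred to your ``where the difficulty lies'' paragraph and is not actually proved. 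The central estimate you would need, $H(Z_i\mid M_1,M_2)\le\phi(d_i)$ per coordinate with $d_i\le 2p\bar p$, is asserted to ``should'' follow from rigidity, but you yourself note that the available estimates give a strictly weaker exponent. Moreover the route through subadditivity, $H(\mathbf{Z}\mid M_1,M_2)\le\sum_i H(Z_i\mid M_1,M_2)$, discards exactly the joint structure the theorem exploits: the messages can reveal a great deal about $\mathbf{Z}$ jointly while revealing little about any single $Z_i$ marginally, so a single-letter inequality of the form $H(Z_i\mid M_1,M_2)\le\phi(d_i)$ with $\max_{d\le 2p\bar p}\phi(d)=h_b(p)-L^*(p)<h_b(p)$ is a much stronger claim than the theorem itself, and nothing in the sketch supports it.

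The paper's proof is genuinely block-level and rests on three ingredients absent from your sketch. First, with $(\tilde{\mathbf{X}},\tilde{\mathbf{Y}},\tilde{\mathbf{Z}})$ an independent copy of $(\mathbf{X},\mathbf{Y},\mathbf{Z})$ given $(M_1,M_2)$, it uses the identity $H(\mathbf{Z}\mid M_1,M_2)=H(M_1,M_2\mid\mathbf{Z},\tilde{\mathbf{Z}})+H(\tilde{\mathbf{Z}}\mid\mathbf{Z})-nH(X,Y\mid Z)$, which converts the problem into bounding two block entropies of the coupled pair. Second, a rigidity lemma (Lemma~\ref{lem:SameFZeroError}: zero error forces $\mathbf{X}\oplus\tilde{\mathbf{X}}=\mathbf{Y}\oplus\tilde{\mathbf{Y}}$ to be a deterministic function of $(\mathbf{Z},\tilde{\mathbf{Z}})$) yields $H(M_1,M_2\mid\mathbf{z},\tilde{\mathbf{z}})\le n(1-d_H(\mathbf{z},\tilde{\mathbf{z}}))$, i.e.\ the cost term $-nd$. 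Third, a sphere-counting bound --- the number of typical $\tilde{\mathbf{z}}$ at normalized distance $d$ from a typical $\mathbf{z}$ is at most $2^{n(p h_b(d/(2p))+\bar p h_b(d/(2\bar p))+\varepsilon)}$ --- controls $H(\tilde{\mathbf{Z}}\mid\mathbf{Z})$; the function $\phi(d)=p h_b(d/(2p))+\bar p h_b(d/(2\bar p))-d$ only emerges from combining these two block bounds, after which Jensen's inequality in $d$ and the separate fact $\mathbb{E}[d_H(\mathbf{Z},\tilde{\mathbf{Z}})]\le 2p\bar p$ (itself a Jensen argument on $q\mapsto 2q(1-q)$, Lemma~\ref{lem:boundOnHamDis}) finish the proof. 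None of these steps is routine, and as written your proposal does not constitute a proof of the theorem.
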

As we can see in Fig.~\ref{fig:exampleExtremeCase}, the lower bound given by Theorem~\ref{th:mainResultZero} improves upon both the cut-set bound and the bound given by Proposition~\ref{th:nairExtension} for all (non-trivial) values of~$p$---the shape of this latter bound around $p=1/2$ is somewhat unexpected.

\section{Proofs} \label{sec:proofs}
Throughout the proofs $\varepsilon$ is an arbitrary constant in $(0,1/2)$ that can be taken arbitrarily small for $n$ sufficiently large.
\begin{proof}[Proof of Theorem~\ref{th:mainResultZero}]
We have
\begin{align}
    R_1+R_2 &\stackrel{(a)}{\geq} \frac{1}{n}H(M_1)+\frac{1}{n}H(M_2)-\varepsilon\nonumber \\
    &\geq \frac{1}{n} H(M_1,M_2)-\varepsilon \nonumber \\
    &= \frac{1}{n} H(M_1,M_2|\bc{Z})+\frac{1}{n} I(M_1,M_2;\bc{Z})-\varepsilon \nonumber \\
    &\stackrel{(b)}{=}
    H(X|Z)+H(Y|Z)+\frac{1}{n} I(M_1,M_2;\bc{Z})-\varepsilon\nonumber \\
    &=
    H(X,Y,Z)-\frac{1}{n} H(\bc{Z}|M_1,M_2)-\varepsilon.
    \label{eq:commonITBounds}
\end{align}
Inequality $(a)$ follows from the unique decodability of the codes which implies that $\mathbb{E}\left[ l(c_i(M_i))\right]\geq H(M_i)$ \cite[Theorem~4.1]{CsisKor82}. Equality $(b)$ follows from the following lemma whose proof is deferred to the end of this section.

\begin{lemma} \label{lem:conditionalUniqueIndices} For any zero-error scheme and for sources that satisfy the Markov chain $X-Z-Y$ we have
\begin{align*}
    H(M_1,M_2|Z)=nH(X|Z)+nH(Y|Z).
\end{align*}
\end{lemma}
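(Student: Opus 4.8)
The plan is to establish the two inequalities $H(M_1,M_2\mid\mathbf{Z})\le nH(X|Z)+nH(Y|Z)$ and $H(M_1,M_2\mid\mathbf{Z})\ge nH(X|Z)+nH(Y|Z)$ separately. The structural facts I will use are that conditioning on the common observation $\mathbf{Z}$ decouples the two encoders, and that the zero-error requirement forces each encoder's index to determine its own source given $\mathbf{Z}$.

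For the upper bound: since the triples $(X_i,Z_i,Y_i)$ are i.i.d.\ and $X-Z-Y$ is a Markov chain, the blocks satisfy $\mathbf{X}-\mathbf{Z}-\mathbf{Y}$, so $M_1=m_1(\mathbf{X},\mathbf{Z})$ and $M_2=m_2(\mathbf{Y},\mathbf{Z})$ are conditionally independent given $\mathbf{Z}$. Hence $H(M_1,M_2\mid\mathbf{Z})=H(M_1\mid\mathbf{Z})+H(M_2\mid\mathbf{Z})\le H(\mathbf{X}\mid\mathbf{Z})+H(\mathbf{Y}\mid\mathbf{Z})=nH(X|Z)+nH(Y|Z)$, using that each $M_i$ is a deterministic function of its encoder's observation.

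For the matching lower bound I will show that for every $\mathbf{z}$ in the support of $\mathbf{Z}$ the map $\mathbf{x}\mapsto m_1(\mathbf{x},\mathbf{z})$ is injective on the conditional support $\mathcal{X}_{\mathbf{z}}$ of $\mathbf{X}$ given $\mathbf{Z}=\mathbf{z}$ (and symmetrically for $m_2$). Because $\mathbf{z}$ lies in the support of $\mathbf{Z}$ and $X-Z-Y$ holds, the conditional support of $(\mathbf{X},\mathbf{Y})$ given $\mathbf{z}$ is the nonempty product $\mathcal{X}_{\mathbf{z}}\times\mathcal{Y}_{\mathbf{z}}$; pick any $\mathbf{y}_0\in\mathcal{Y}_{\mathbf{z}}$. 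If two distinct $\mathbf{x},\mathbf{x}'\in\mathcal{X}_{\mathbf{z}}$ shared the index $m_1(\mathbf{x},\mathbf{z})=m_1(\mathbf{x}',\mathbf{z})$, then the inputs $(\mathbf{x},\mathbf{z},\mathbf{y}_0)$ and $(\mathbf{x}',\mathbf{z},\mathbf{y}_0)$---both of positive probability---yield the same index pair $(M_1,M_2)$ (recalling that unique decodability lets the receiver recover $(M_1,M_2)$ from the received codewords), while $\mathbf{x}\oplus\mathbf{y}_0\ne\mathbf{x}'\oplus\mathbf{y}_0$, contradicting zero-error decoding. Injectivity gives $H(M_1\mid\mathbf{Z}=\mathbf{z})=H(\mathbf{X}\mid\mathbf{Z}=\mathbf{z})$; averaging over $\mathbf{z}$ and adding the analogous identity for $M_2$, together with the conditional-independence decomposition above, yields $H(M_1,M_2\mid\mathbf{Z})=nH(X|Z)+nH(Y|Z)$.

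I do not expect a genuine obstacle. The only point requiring care is recognizing that the zero-error constraint on $\mathbf{X}\oplus\mathbf{Y}$ is equivalent, at the level of each fixed $\mathbf{z}$, to each source being described losslessly and separately---obtained by freezing the other source to a single admissible value so that any collision in one encoder's index immediately produces a wrong sum---and that the Markov chain $X-Z-Y$ is exactly what makes such a freezing legitimate (the conditional support is a product) and makes the two indices independent given $\mathbf{Z}$.
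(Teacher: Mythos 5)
Your proposal is correct and follows essentially the same route as the paper: the key step in both is that the Markov chain $X-Z-Y$ makes the conditional support of $(\bc{X},\bc{Y})$ given $\bc{z}$ a product set, so freezing one admissible $\bc{y}_0$ shows that any index collision between two $\bc{x}$ sequences of positive conditional probability would produce two different sums $\bc{x}\oplus\bc{y}_0\neq\bc{x}'\oplus\bc{y}_0$ under the same message pair, contradicting zero error. The paper phrases the conclusion as a one-to-one map between $(M_1,M_2)$ and $(\bc{X},\bc{Y})$ given $\bc{z}$ rather than splitting into matching upper and lower bounds, but the substance is identical.
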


 The main part of the proof consists in the derivation of a good upper bound on $H(\bc{Z}|M_1,M_2)$ by introducing the following coupling of $(\bc{X},\bc{Y},\bc{Z})$.\footnote{A trivial upper bound is $H(\bc{Z})$, which results in the first term of the cut-set bound \eqref{eq:cutSet}.} Given a coding scheme, let  $(\tilde{\bc{X}},\tilde{\bc{Y}},\tilde{\bc{Z}})$ be an independent copy of $({\bc{X}},{\bc{Y}},{\bc{Z}})$ conditioned on $(M_1,M_2)$.  Hence, we have the Markov chain 
\begin{align}
    (\tilde{\bc{X}},\tilde{\bc{Y}},\tilde{\bc{Z}})-(M_1,M_2)-(\bc{X},\bc{Y},\bc{Z}) \label{eq:MarkovChain}
\end{align}
and the marginal probabilities $P_{\tilde{\bc{X}},\tilde{\bc{Y}},\tilde{\bc{Z}}}$ and $P_{\bc{X},\bc{Y},\bc{Z}}$ coincide. 

Associated with the above coupling is the $\bc{Z}$-distance of the underlying scheme which we  define as \begin{align*}
    &d_{avg}(\bc{Z})\triangleq \mathbb{E}\left[d_H(\bc{Z},\tilde{\bc{Z}})\right]\\
&={\sum \limits_{\substack{(m_1,m_2)\\(\bc{z}_1,\bc{z}_2)}} } d_H(\bc{z}_1,\bc{z}_2) p(\bc{z}_1|m_1,m_2)p(\bc{z}_2|m_1,m_2)p(m_1,m_2),
\end{align*} 
 where $d_H(\cdot,\cdot)$ denotes the normalized Hamming distance.
Now, 
\begin{align}
H(&\bc{Z}\big| M_1,M_2)\nonumber \\&\overset{(a)}{=}H(\tilde{\bc{Z}}\big| M_1,M_2,\bc{Z})\nonumber\\
&=H(M_1,M_2\big| \bc{Z},\tilde{\bc{Z}}){+}H(\tilde{\bc{Z}}|\bc{Z}) {-}H(M_1,M_2|\bc{Z})\nonumber\\
&\overset{(b)}{=}H(M_1,M_2\big| \bc{Z},\tilde{\bc{Z}}){+}H(\tilde{\bc{Z}}|\bc{Z}) {-}nH(X,Y|Z), \label{eq:zeroErrorEq1}
\end{align}
where $(a)$ holds because of \eqref{eq:MarkovChain} and where $(b)$ follows from Lemma~\ref{lem:conditionalUniqueIndices}.

The rest of the proof is divided into four parts: $i.$ upper bound on $H(M_1,M_2\big| \bc{Z},\tilde{\bc{Z}})$, $ii.$ upper bound on $H(\tilde{\bc{Z}}|\bc{Z})$, $iii.$ upper bound $H(\bc{Z}\big| M_1,M_2)$ using $i.$ and $ii.$, and finally lower bound the sum-rate in part $iv.$

\paragraph*{\sc{$i$}}  For  $H(M_1,M_2|\bc{Z},\tilde{\bc{Z}})$ we have
\begin{align} 
 H(M_1&,M_2 |\bc{Z},\tilde{\bc{Z}}) \nonumber\\
  \stackrel{(a)}{=}& H\left (M_1,M_2 |\bc{Z},\tilde{\bc{Z}},\bc{X}\oplus \tilde{\bc{X}}=\bc{Y}\oplus \tilde{\bc{Y}}=g(\bc{Z},\tilde{\bc{Z}})\right)\nonumber\\
  \leq& H\left (\bc{X},\bc{Y},M_1,M_2 |\bc{Z},\tilde{\bc{Z}},\bc{X}\oplus \tilde{\bc{X}}=\bc{Y}\oplus \tilde{\bc{Y}}=g(\bc{Z},\tilde{\bc{Z}})\right)\nonumber\\
 \stackrel{(b)}{=}& H\left (\bc{X},\bc{Y} |\bc{Z},\tilde{\bc{Z}},\bc{X}\oplus \tilde{\bc{X}}=\bc{Y}\oplus \tilde{\bc{Y}}=g(\bc{Z},\tilde{\bc{Z}})\right)\nonumber\\
 \leq &H\left(\bc{X}|\bc{Z},\tilde{\bc{Z}},\bc{X}\oplus \tilde{\bc{X}}=g(\bc{Z},\tilde{\bc{Z}})\right)\nonumber \\
 &+H\left(\bc{Y}|\bc{Z},\tilde{\bc{Z}},\bc{Y}\oplus \tilde{\bc{Y}}=g(\bc{Z},\tilde{\bc{Z}})\right). \label{eq:zeroErrorEq2}
\end{align}
Equality $(a)$ follows from Lemma~\ref{lem:SameFZeroError} below (the proof is deferred to the end of this section) and Equality $(b)$ holds since $(M_1,M_2)$ is a function of $(\bc{X},\bc{Y},\bc{Z})$.
\begin{lemma}\label{lem:SameFZeroError} For any zero-error scheme and for sources that satisfy the Markov chain $X-Z-Y$, there exists a function $g:\mathcal{Z}^n \times \mathcal{Z}^n \rightarrow \{0,1\}^n$ such that 
     \begin{align*}
 P\left(\bc{X}\oplus \tilde{\bc{X}}=g(\bc{Z},\tilde{\bc{Z}})= \bc{Y}\oplus \tilde{\bc{Y}}\right)=1.
\end{align*} 
\end{lemma}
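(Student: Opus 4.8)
The plan rests on two facts: the zero-error requirement makes the modulo-two sum a deterministic function of the index pair $(M_1,M_2)$, and the Markov chain $\bc{X}-\bc{Z}-\bc{Y}$ makes the support of $P_{\bc{X},\bc{Z},\bc{Y}}$ rectangular over each value of $\bc{z}$, so that the $X$-part of one positive-probability realization and the $Y$-part of another may be spliced together whenever they share the same $\bc{z}$.

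First I would record the realization-level content of zero error: every triple $(\bc{x},\bc{z},\bc{y})$ with $P_{\bc{X},\bc{Z},\bc{Y}}(\bc{x},\bc{z},\bc{y})>0$ satisfies $\mathcal{D}\bigl(m_1(\bc{x},\bc{z}),m_2(\bc{y},\bc{z})\bigr)=\bc{x}\oplus\bc{y}$, so any two such triples carrying the same index pair have the same modulo-two sum, and rearranging the two identities gives $\bc{x}_1\oplus\bc{x}_2=\bc{y}_1\oplus\bc{y}_2$ for them. Since $(\bc{X},\bc{Z},\bc{Y})$ and $(\tilde{\bc{X}},\tilde{\bc{Z}},\tilde{\bc{Y}})$ carry the same $(M_1,M_2)$ by construction, it therefore suffices to show that for positive-probability configurations sharing an index pair the common value $\bc{x}_1\oplus\bc{x}_2=\bc{y}_1\oplus\bc{y}_2$ depends only on the pair of $\bc{z}$-components; then defining $g$ to be that value yields $\bc{X}\oplus\tilde{\bc{X}}=g(\bc{Z},\tilde{\bc{Z}})=\bc{Y}\oplus\tilde{\bc{Y}}$ almost surely.

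The heart of the argument is the splicing step. Let $(\bc{x}_1,\bc{z}_1,\bc{y}_1),(\bc{x}_2,\bc{z}_2,\bc{y}_2)$ be positive-probability triples with a common index pair $(m_1,m_2)$, and let $(\bc{x}_1',\bc{z}_1,\bc{y}_1'),(\bc{x}_2',\bc{z}_2,\bc{y}_2')$ be positive-probability triples with a common (possibly different) index pair $(m_1',m_2')$, the first triple of each pair having $\bc{z}$-component $\bc{z}_1$ and the second having $\bc{z}$-component $\bc{z}_2$. Because $P_{\bc{X},\bc{Z},\bc{Y}}(\bc{x},\bc{z},\bc{y})=P_{\bc{Z}}(\bc{z})P_{\bc{X}|\bc{Z}}(\bc{x}|\bc{z})P_{\bc{Y}|\bc{Z}}(\bc{y}|\bc{z})$ under the Markov chain, the spliced triples $(\bc{x}_1,\bc{z}_1,\bc{y}_1')$ and $(\bc{x}_2,\bc{z}_2,\bc{y}_2')$ again have positive probability, and each carries the index pair $(m_1,m_2')$ since each transmitter's index is a function of its own observation only. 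Zero error applied to the pair $(m_1,m_2')$ gives $\bc{x}_1\oplus\bc{y}_1'=\bc{x}_2\oplus\bc{y}_2'$, i.e. $\bc{x}_1\oplus\bc{x}_2=\bc{y}_1'\oplus\bc{y}_2'$, while the original pair $(\bc{x}_1',\bc{z}_1,\bc{y}_1'),(\bc{x}_2',\bc{z}_2,\bc{y}_2')$ gives $\bc{x}_1'\oplus\bc{x}_2'=\bc{y}_1'\oplus\bc{y}_2'$; hence $\bc{x}_1\oplus\bc{x}_2=\bc{x}_1'\oplus\bc{x}_2'$, which is exactly the required dependence only on $(\bc{z}_1,\bc{z}_2)$.

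Putting these together, I would define $g(\bc{z}_1,\bc{z}_2)$ as the common value of $\bc{x}_1\oplus\bc{x}_2$ over all positive-probability configurations with $\bc{z}$-components $(\bc{z}_1,\bc{z}_2)$ sharing an index pair (and arbitrarily on pairs that never occur in the coupling), after which $\bc{X}\oplus\tilde{\bc{X}}=g(\bc{Z},\tilde{\bc{Z}})=\bc{Y}\oplus\tilde{\bc{Y}}$ with probability one. I expect the main thing to get right is the index bookkeeping through the splice—namely that $(\bc{x}_1,\bc{z}_1,\bc{y}_1')$ really inherits index $m_1=m_1(\bc{x}_1,\bc{z}_1)$ from the first configuration and index $m_2'=m_2(\bc{y}_1',\bc{z}_1)$ from the second—together with making explicit that the positive-probability support of $P_{\bc{X},\bc{Z},\bc{Y}}$ is precisely the product set $\{(\bc{x},\bc{z},\bc{y}):P_{\bc{Z}}(\bc{z})>0,\ P_{\bc{X}|\bc{Z}}(\bc{x}|\bc{z})>0,\ P_{\bc{Y}|\bc{Z}}(\bc{y}|\bc{z})>0\}$, which is where the Markov hypothesis enters.
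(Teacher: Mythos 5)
Your proposal is correct and follows essentially the same route as the paper's proof: first deduce from zero error that any two positive-probability triples sharing an index pair have equal modulo-two sums, then use the Markov chain to splice the $\bc{y}$-components across two configurations with the same $\bc{z}$-components, which forces $\bc{x}_1\oplus\bc{x}_2$ to depend only on the $\bc{z}$-pair. The only difference is presentational—you establish well-definedness of $g$ directly where the paper argues by contradiction—and your attention to the index bookkeeping through the splice is exactly the point the paper's argument also hinges on.
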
We now bound the two terms on the right-hand side of \eqref{eq:zeroErrorEq2} for the particular distribution \eqref{eq:probDist}.
Conditioned on $\bc{Z}=\bc{z}$ and $\tilde{\bc{Z}}=\tilde{\bc{z}}$, for any index $i$ such that $OR(z_i,\tilde{z}_i)= 1$, either $X_i=1$ or $\tilde{X}_i=1$ (or both). In each of these cases  $H(X_i|\bc{z},\tilde{\bc{z}},\bc{X}\oplus \tilde{\bc{X}}=g(\bc{z},\tilde{\bc{z}}))=0$, and consequently 
\begin{align*}
     H\left(\bc{X}|\bc{z},\tilde{\bc{z}},\bc{X}\oplus \tilde{\bc{X}}=g(\bc{z},\tilde{\bc{z}})\right) \leq n\hat{p}_{\bc{z},\tilde{\bc{z}}}(0,0),
\end{align*}
where $\hat{p}_{\bc{z},\tilde{\bc{z}}}(\cdot,\cdot)$ denotes the empirical distribution of $(\bc{z},\tilde{\bc{z}})$. 
Using a similar argument for $\bc{Y}$, we deduce from  \eqref{eq:zeroErrorEq2} that
\begin{align} 
 H\left (M_1,M_2 |\bc{z},\tilde{\bc{z}}\right) &\leq n(\hat{p}_{\bc{z},\tilde{\bc{z}}}(0,0)+\hat{p}_{\bc{z},\tilde{\bc{z}}}(1,1)) \nonumber\\
 &= n(1-d_H(\bc{z},\tilde{\bc{z}})), \nonumber
\end{align}
and hence,
\begin{align} 
 H\left (M_1,M_2 |\bc{Z},\tilde{\bc{Z}}\right) &\leq  n-n\sum \limits_d d
 \cdot P\left(d_H(\bc{Z},\tilde{\bc{Z}})=d\right). \label{eq:zeroErrorEq3} 
\end{align}
\paragraph*{ii} The term $ H(\tilde{\bc{Z}}|\bc{Z})$ in \eqref{eq:zeroErrorEq1} is bounded as 
\begin{align}
    H(\tilde{\bc{Z}}|\bc{Z}) \leq& H\left(\tilde{\bc{Z}},d_H(\bc{Z},\tilde{\bc{Z}})|\bc{Z}\right) \nonumber\\
    =& H\left(\tilde{\bc{Z}}|\bc{Z},d_H(\bc{Z},\tilde{\bc{Z}})\right)+H\left(d_H(\bc{Z},\tilde{\bc{Z}})|\bc{Z}\right) \nonumber\\
    \stackrel{(a)}{\leq}& H\left(\tilde{\bc{Z}}|\bc{Z},d_H(\bc{Z},\tilde{\bc{Z}})\right)+n\varepsilon \nonumber \\
    =& \sum \limits_{d} H\left(\tilde{\bc{Z}}|\bc{Z},d_H(\bc{Z},\tilde{\bc{Z}})=d\right)P(d_H(\bc{Z},\tilde{\bc{Z}})=d)\nonumber\\
    &+n\varepsilon,\label{eq:zeroErrorEq4.1}
\end{align}
where  $(a)$ holds since $d_H(\bc{Z},\tilde{\bc{Z}})$ takes at most $n+1$ values. 

To bound $H\left(\tilde{\bc{Z}}|\bc{z},d_H(\bc{z},\tilde{\bc{Z}})=d)\right)$, first suppose that $\hat{p}_{\bc{z}}(0)=p$---hence $\bc{z}$ belongs to the set of strongly typical sequences $ \mathcal{T}_{\varepsilon}^n(Z)$. For any  strongly typical sequence $\tilde{\bc{z}}$ such that $d_H(\bc{z},\tilde{\bc{z}})=d$, it can be verified that both $\hat{p}_{\bc{z},\tilde{\bc{z}}}(0,1)$ and $\hat{p}_{\bc{z},\tilde{\bc{z}}}(1,0)$ are within the range
$$[(d-\varepsilon)/2,(d+\varepsilon)/2]. $$
The number of such sequences $\tilde{\bc{z}}$ is hence upper bounded by $2^{n\left(p h_b\left(\frac{d}{2p}\right) + \bar{p}  h_b\left(\frac{d}{2\bar{p}}\right)+\varepsilon\right)}$, and therefore $$H\left(\tilde{\bc{Z}}|\bc{z},d_H(\bc{z},\tilde{\bc{Z}})=d)\right)\leq p h_b\left(\frac{d}{2p}\right) + \bar{p}  h_b\left(\frac{d}{2\bar{p}}\right)+\varepsilon.$$ A similar bound holds for any strongly typical sequence $\bc{z}$ where $\hat{p}_{\bc{z}}(0)=p\pm\delta$ and $0<\delta<\epsilon$. Consequently, \eqref{eq:zeroErrorEq4.1} can be bounded as
\begin{align}
    H(\tilde{\bc{Z}}|\bc{Z}) \leq& n\sum \limits_{d} \left(p h_b\left(\frac{d}{2p}\right) + \bar{p}  h_b\left(\frac{d}{2\bar{p}}\right)\right)P\left(d_H(\bc{Z},\tilde{\bc{Z}})=d\right)\nonumber \\
    &+n\varepsilon. \label{eq:zeroErrorEq4}
\end{align}

\paragraph*{iii}Combining  \eqref{eq:zeroErrorEq1}, \eqref{eq:zeroErrorEq3}, and \eqref{eq:zeroErrorEq4} and using the fact that $H(X,Y|Z)=1$, the term $H(\bc{Z}|M_1,M_2)$ can be bounded as
\begin{align}
    \frac{1}{n} H(\bc{Z}&|M_1,M_2) -\varepsilon\nonumber\\
    \leq&\sum \limits_{d} \left(p h_b\left(\frac{d}{2p}\right) + \bar{p}  h_b\left(\frac{d}{2\bar{p}}\right)-d\right)P\left(d_H(\bc{Z},\tilde{\bc{Z}})=d\right)\nonumber \nonumber\\
    \leq & p h_b\left(\frac{d_{avg}(\bc{Z})}{2p}\right) + \bar{p}  h_b\left(\frac{d_{avg}(\bc{Z})}{2\bar{p}}\right)-d_{avg}(\bc{Z})\nonumber\\
    =& h_b(p)-L\big(p,d_{avg}(\bc{Z})\big). \label{eq:zeroErrorEqFinal}
\end{align}
where the second inequality follows from
Jensen's inequality and the concavity of the  entropy function. \paragraph*{iv} Finally, combining  \eqref{eq:commonITBounds} and \eqref{eq:zeroErrorEqFinal} yields
\begin{align*}
    R_1+R_2    \geq&  1+L\big(p,d_{avg}(\bc{Z})\big)-\varepsilon\\
    \geq &  1+\min \limits_{d\leq 2p\bar{p}} L\big(p,d\big)-\varepsilon,
\end{align*}
where the second inequality follows from:
\begin{lemma} \label{lem:boundOnHamDis} For any scheme we have $d_{avg}(\bc{Z}) \leq  2p\bar{p}$. 
\end{lemma}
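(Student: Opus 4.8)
The quantity to bound is $d_{avg}(\bc{Z}) = \mathbb{E}[d_H(\bc{Z},\tilde{\bc{Z}})]$, where $(\tilde{\bc{X}},\tilde{\bc{Y}},\tilde{\bc{Z}})$ is an independent copy of $(\bc{X},\bc{Y},\bc{Z})$ given $(M_1,M_2)$. My plan is to condition on the pair of messages $(M_1,M_2)$ and reduce the claim to a coordinate-wise statement. For fixed $(m_1,m_2)$, let $q_i(m_1,m_2)\triangleq P(Z_i=0\mid M_1=m_1,M_2=m_2)$ be the conditional marginal of the $i$-th coordinate. Since $\bc{Z}$ and $\tilde{\bc{Z}}$ are conditionally independent given $(M_1,M_2)$, we have
\begin{align*}
\mathbb{E}\left[d_H(\bc{Z},\tilde{\bc{Z}})\,\middle|\,M_1=m_1,M_2=m_2\right] &= \frac{1}{n}\sum_{i=1}^n P\left(Z_i \neq \tilde{Z}_i\,\middle|\,m_1,m_2\right)\\
&= \frac{1}{n}\sum_{i=1}^n 2 q_i(m_1,m_2)\bigl(1-q_i(m_1,m_2)\bigr).
\end{align*}
Each term is at most $2\cdot\frac14=\frac12$; so this already gives $d_{avg}(\bc{Z})\le \frac12$, but $2p\bar p$ can be much smaller, so I need to exploit more structure.

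The key observation is that averaging over $(M_1,M_2)$ recovers the true marginal: $\mathbb{E}[q_i(M_1,M_2)] = P(Z_i=0) = \hat p$ in expectation, and more usefully, when $\bc{Z}$ is (strongly) typical, $\frac1n\sum_i \mathbb{E}[q_i(M_1,M_2)]$ is close to $p$. I would then use concavity of $t\mapsto 2t(1-t)$ together with Jensen's inequality: writing $d_{avg}(\bc{Z}) = \mathbb{E}_{M_1,M_2}\bigl[\frac1n\sum_i 2q_i(1-q_i)\bigr]$ and pulling the expectation inside via Jensen applied to the concave function $2t(1-t)$, one gets $d_{avg}(\bc{Z}) \le \frac1n\sum_i 2\,\mathbb{E}[q_i]\bigl(1-\mathbb{E}[q_i]\bigr)$. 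Since $\mathbb{E}_{M_1,M_2}[q_i(M_1,M_2)] = P(Z_i=0)$, and again by concavity $\frac1n\sum_i 2P(Z_i=0)P(Z_i=1) \le 2\bar q(1-\bar q)$ where $\bar q = \frac1n\sum_i P(Z_i=0)$, and $\bar q \to p$ by the typicality assumption underlying \eqref{eq:probDist} (the source is i.i.d.\ so in fact $P(Z_i=0)=p$ exactly for every $i$), this yields $d_{avg}(\bc{Z}) \le 2p\bar p$ directly.

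The main subtlety — and where I expect to spend the most care — is the direction of Jensen's inequality: $2t(1-t)$ is concave, so Jensen gives $\mathbb{E}[2q_i(1-q_i)] \le 2\mathbb{E}[q_i](1-\mathbb{E}[q_i])$, which is the inequality I want, so this works in the favorable direction. One should double-check that no hidden $\varepsilon$ slack is needed: because the source is genuinely i.i.d.\ with $P(Z_i=0)=p$ for all $i$, the identity $\mathbb{E}_{M_1,M_2}[q_i(M_1,M_2)]=p$ is exact, and the bound $d_{avg}(\bc{Z})\le 2p\bar p$ holds with no error term. If instead one only had typicality, one would carry a $2\varepsilon$-type correction through, which is harmless given the conventions of the proof. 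I would present the argument as: (1) condition on $(M_1,M_2)$ and use conditional independence to write the per-coordinate disagreement probability as $2q_i(1-q_i)$; (2) apply Jensen to the concave map $t\mapsto 2t(1-t)$ to move the $(M_1,M_2)$-average inside; (3) identify $\mathbb{E}_{M_1,M_2}[q_i]=P(Z_i=0)=p$; (4) apply concavity once more across coordinates to conclude $d_{avg}(\bc{Z})\le 2p\bar p$.
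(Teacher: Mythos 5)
Your proposal is correct and follows essentially the same route as the paper: condition on $(M_1,M_2)$, use the conditional independence of $\bc{Z}$ and $\tilde{\bc{Z}}$ to write the per-coordinate disagreement probability as $2q_i(1-q_i)$, and apply Jensen's inequality to the concave map $t\mapsto 2t(1-t)$ together with $\mathbb{E}[q_i]=P(Z_i=0)=p$. Your remark that the source is i.i.d.\ so no typicality slack is needed is accurate and matches the paper's (implicit) use of $P(Z_i=0)=p$ exactly.
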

This concludes the proof of Theorem~\ref{th:mainResultZero}.
\end{proof}
\begin{proof}[Proof of Lemma~\ref{lem:conditionalUniqueIndices}]
 Fix $\bc{Z}=\bc{z}$. Let $\bc{x}_1, \bc{x}_2,\bc{y}$ be such that  $p(\bc{x}_1|\bc{z})>0$, $p(\bc{x}_2|\bc{z})>0$, and $p(\bc{y}|\bc{z})>0$. Because of the Markov chain $X-Z-Y$, we have $p(\bc{x}_1,\bc{y}|\bc{z})>0$ and $p(\bc{x}_2,\bc{y}|\bc{z})>0$. Hence, if $(\bc{x}_1,z)$ and $(\bc{x}_2,\bc{z})$ were assigned the same index, the receiver would not be able to perfectly differentiate between $\bc{x}_1\oplus \bc{y}$ and $\bc{x}_2\oplus \bc{y}$. Thus, for any fixed $\bc{Z}=\bc{z}$,  different indices should be assigned to different $\bc{x}$ sequences having nonzero conditional probability $p(\bc{x}|\bc{z})$---and similarly for the $\bc{y}$ sequences. This implies that conditioned on $\bc{Z}=\bc{z}$, there is a one-to-one mapping between $(M_1,M_2)$ and $(\bc{X},\bc{Y})$, which yields 
 \begin{align*}
 H(M_1,M_2|\bc{Z})=H(\bc{X},\bc{Y}|\bc{Z})=nH(X|Z)+nH(Y|Z).    
 \end{align*}
 \end{proof}
 
\begin{proof}[Proof of Lemma~\ref{lem:SameFZeroError}]
First we show that
\begin{align}
    P\left(\bc{X}\oplus \bc{Y}=\tilde{\bc{X}}\oplus \tilde{\bc{Y}}\right)=1 \label{eq:equalF}.
\end{align}
 By contradiction, suppose there exist pairs $(\bc{x},\bc{y})$ and $(\tilde{\bc{x}},\tilde{\bc{y}})$ such that $\bc{x}\oplus \bc{y}\neq \tilde{\bc{x}}\oplus \tilde{\bc{y}}$ and $p(\bc{x},\bc{y},\tilde{\bc{x}},\tilde{\bc{y}})>0$. Due to the Markov chain $(\tilde{\bc{X}},\tilde{\bc{Y}})-(M_1,M_2)-(\bc{X},\bc{Y})$, there exists an index pair $(m_1,m_2)$ such that 
\begin{align*}     p(\bc{x},\bc{y}|m_1,m_2)  p(\tilde{\bc{x}},\tilde{\bc{y}}|m_1,m_2)>0.
\end{align*}
This contradicts the zero-error assumption.

We now argue that there exists a function $g(\bc{Z},\tilde{\bc{Z}})$ such that 
    \begin{align*}
        P\left(\bc{X}\oplus \tilde{\bc{X}}=g(\bc{Z},\tilde{\bc{Z}})\right)=1.
    \end{align*}
    By contradiction suppose that such a function does not exist. Then, there exist  $(\bc{x}_1,\bc{y}_1,\bc{z})$, $(\bc{x}_2,\bc{y}_2,\bc{z})$, $(\tilde{\bc{x}}_1,\tilde{\bc{y}}_1,\tilde{\bc{z}})$, and $(\tilde{\bc{x}}_2,\tilde{\bc{y}}_2,\tilde{\bc{z}})$ such that
    \begin{align*}
        p(\bc{x}_1,\bc{y}_1,\bc{z},\tilde{\bc{x}}_1,\tilde{\bc{y}}_1,\tilde{\bc{z}})&>0,\\
        p(\bc{x}_2,\bc{y}_2,\bc{z},\tilde{\bc{x}}_2,\tilde{\bc{y}}_2,\tilde{\bc{z}})&>0,
    \end{align*} 
    and such that $\bc{x}_1\oplus \tilde{\bc{x}}_1 \neq \bc{x}_2\oplus\tilde{\bc{x}}_2$.
    From \eqref{eq:equalF}, $$ \bc{x}_1\oplus \bc{y}_1 =  \tilde{\bc{x}}_1  \oplus \tilde{\bc{y}}_1$$ and $$\bc{x}_2\oplus \bc{y}_2=  \tilde{\bc{x}}_2 \oplus \tilde{\bc{y}}_2.$$ Moreover,  \begin{align*}
        p(\bc{x}_1,\bc{y}_2,\bc{z},\tilde{\bc{x}}_1,\tilde{\bc{y}}_2,\tilde{\bc{z}})>0,
    \end{align*}
    and hence by \eqref{eq:equalF} we have $\bc{x}_1\oplus \tilde{\bc{x}}_1 = \bc{y}_2\oplus  \tilde{\bc{y}}_2$. The  equalities 
    \begin{align*}
    \bc{x}_2\oplus \bc{y}_2=  \tilde{\bc{x}}_2 \oplus \tilde{\bc{y}}_2&,~
    \bc{x}_1\oplus \tilde{\bc{x}}_1 = \bc{y}_2\oplus  \tilde{\bc{y}}_2,
    \end{align*}
    contradicts $   \bc{x}_1\oplus \tilde{\bc{x}}_1 \neq \bc{x}_2 \oplus  \tilde{\bc{x}}_2$.
    
A similar argument shows that there exists a function $g'(\bc{Z},\tilde{\bc{Z}})$ such that
$        P\left(\bc{Y}\oplus \tilde{\bc{Y}}=g'(\bc{Z},\tilde{\bc{Z}})\right)=1.$
   From \eqref{eq:equalF} it then follows that $g'(\bc{Z},\tilde{\bc{Z}})=g(\bc{Z},\tilde{\bc{Z}})$.
\end{proof}

\begin{proof}[Proof of Lemma~\ref{lem:boundOnHamDis}] 
The Markov chain $\bc{Z}-(M_1,M_2)-\tilde{\bc{Z}}$ implies that $Z_i$ has the same conditional distribution as  $\tilde{Z}_i$ given $(m_1,m_2)$. Hence, define
\begin{align*}
    p(i|m_{1,2})\triangleq P(Z_i=0|m_1,m_2)=P(\tilde{Z}_i=0|m_1,m_2).
\end{align*}
By the Markov chain $Z_i-(M_1,M_2)-\tilde{Z}_i$,
$$\mathbb{E}\left[d_H(Z_i,\tilde{Z}_i) \big| m_1,m_2\right] =2p(i|m_{1,2})(1-p(i|m_{1,2})).$$
The $\bc{Z}$-distance can then be written as
\begin{align}
    d_{avg}(\bc{Z})&=\mathbb{E}\left[d_H(\bc{Z},\tilde{\bc{Z}}) \right] \nonumber \\
     &= \sum \limits_{m_1,m_2} p(m_1,m_2) \mathbb{E}\left[d_H(\bc{Z},\tilde{\bc{Z}}) \big| m_1,m_2\right] \nonumber \\
    &=\frac{1}{n} \sum \limits_{m_1,m_2} p(m_1,m_2) \sum \limits_{i=1}^n \mathbb{E}\left[d_H(Z_i,\tilde{Z}_i) \big| m_1,m_2\right] \nonumber \\
    &= \frac{2}{n}\sum \limits_{i=1}^n\sum \limits_{m_1,m_2} p(m_1,m_2) p(i|m_{1,2})(1-p(i|m_{1,2}))\nonumber\\
     &\stackrel{(a)}{\leq} \frac{2}{n}\sum \limits_{i=1}^n P(Z_i=0)(1-P(Z_i=0))\nonumber\\
    &= 2p\bar{p}, \nonumber
\end{align}
where $(a)$ follows from Jensen's inequality applied to the concave function $x-x^2$ and by noting that \begin{align*}
    P(Z_i=0)=\sum \limits_{m_1,m_2} p(m_1,m_2) p(i|m_{1,2}). &\qedhere
\end{align*}
\end{proof}

\section{Generalization of Theorem~\ref{th:mainResultZero}} \label{sec:general}
In this section, we state our result for general sources and provide a sketch of its proof. 

\begin{theorem} \label{th:mainResultZeroGeneral}
For any zero-error scheme and for sources that satisfy the Markov chain $X-Z-Y$, we have
\begin{align}
    R_1+R_2 \geq H(X|Z)+H(Y|Z)+L^*(p), \label{eq:generalBound}
\end{align}
where
\begin{align*}
    L^*(p) &\triangleq\min \limits_{d \leq 2p\bar{p}} L(p,d).
\end{align*}
Here, $L(p,d)$ is redefined as:
\begin{IEEEeqnarray}{rcl}
    L(p,d) {\triangleq}& 
    H(X,&Y,Z)-p  h_b\left(\frac{d'}{p}\right)-\bar{p} h_b\left(\frac{d'}{\bar{p}}\right){-}
\nonumber\\
    &{\max \limits_{\substack{\bc{u},\bc{v},w}}}\Bigg[& (p-d') h_b \left(\frac{u_1}{p-d'}\right){+}(\bar{p}-d') h_b \left(\frac{u_4}{\bar{p}-d'}\right)   \nonumber\\
    &&+2 g h_b \left(\frac{u_2}{w}\right)+2 (d'-w) h_b \left(\frac{u_3}{d'-w}\right)\nonumber\\
    &&+ (p-d') h_b \left(\frac{v_1}{p-d'}\right){+}(\bar{p}-d') h_b \left(\frac{v_4}{\bar{p}-d'}\right)   \nonumber\\
    &&+2 g h_b \left(\frac{v_2}{w}\right)+2 (d'-w) h_b \left(\frac{v_3}{d'-w}\right)\Bigg], \label{eq:LDefinition}
\end{IEEEeqnarray}
where $d'=d/2$, $\bc{u}\triangleq(u_1,u_2,u_3,u_4)$, $\bc{v}\triangleq(v_1,v_2,v_3,v_4)$, and where the maximization is taken over $0 \leq w \leq d'$ and  $\bc{u}$ and $\bc{v}$ such that
\begin{align}
    u_1 \leq p-d',&~u_2\leq w,~u_3\leq d'-w,~u_4\leq \bar{p}-d',\nonumber \\
    v_1\leq p-d',&~v_2\leq w,~v_3\leq d'-w,~v_4\leq \bar{p}-d', \label{eq:uvRange}
\end{align}
and such that
\begin{align}
    u_1+u_2+u_3&=p \cdot P(X=0|Z=0), \nonumber\\
    u_2-u_3+u_4&= \bar{p}\cdot  P(X=0|Z=1)+w-d',\nonumber\\
    v_1+v_2+v_3&=p \cdot P(Y=0|Z=0),\nonumber\\
    v_2-v_3+v_4&= \bar{p}\cdot  P(Y=0|Z=1)+w-d'. \label{eq:uvLinearC}
\end{align}
\end{theorem}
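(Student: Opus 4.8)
The plan is to mirror the structure of the proof of Theorem~\ref{th:mainResultZero}, replacing the distribution-specific counting arguments by ones that track the full joint type of $(\bc{Z},\tilde{\bc{Z}})$ and of the relevant $\bc{X}$- and $\bc{Y}$-coordinates. As before, I would start from the chain of inequalities in \eqref{eq:commonITBounds}, which does not use \eqref{eq:probDist} and gives
\begin{align*}
    R_1+R_2 \geq H(X|Z)+H(Y|Z)+H(Z)-\tfrac{1}{n}H(\bc{Z}|M_1,M_2)-\varepsilon,
\end{align*}
so the task reduces to upper-bounding $\tfrac{1}{n}H(\bc{Z}|M_1,M_2)$ by $H(Z)-L^*(p)$. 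Introduce the same coupling $(\tilde{\bc{X}},\tilde{\bc{Y}},\tilde{\bc{Z}})$ and, via \eqref{eq:zeroErrorEq1} (which again only uses Lemma~\ref{lem:conditionalUniqueIndices}), write $H(\bc{Z}|M_1,M_2)=H(M_1,M_2|\bc{Z},\tilde{\bc{Z}})+H(\tilde{\bc{Z}}|\bc{Z})-nH(X,Y|Z)$. Lemmas~\ref{lem:SameFZeroError} and~\ref{lem:boundOnHamDis} hold verbatim for general binary sources satisfying $X-Z-Y$, so they remain available. The extra parameter $w$ and the vectors $\bc{u},\bc{v}$ will arise as the (empirical) counts describing how $\bc{X}$ behaves on the four sub-blocks determined by the pair-type of $(\bc{Z},\tilde{\bc{Z}})$: coordinates where $(Z_i,\tilde Z_i)=(0,0)$, where $(1,1)$, and the two kinds of ``disagreement'' coordinates $(0,1)$ and $(1,0)$, with $w$ splitting the disagreement coordinates further according to the value of $g(\bc{Z},\tilde{\bc{Z}})=\bc{X}\oplus\tilde{\bc{X}}$ on them.

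The computation of part~$ii$, the bound on $H(\tilde{\bc{Z}}|\bc{Z})$, generalizes directly: conditioning on the Hamming distance $d$ between strongly-typical $\bc{z}$ and $\tilde{\bc{z}}$, the number of $\tilde{\bc{z}}$ at distance $d$ from a typical $\bc{z}$ is, up to $2^{n\varepsilon}$, equal to $2^{n(p\,h_b(d'/p)+\bar p\,h_b(d'/\bar p))}$ where $d'=d/2$ (the disagreements split evenly between the $Z=0$ and $Z=1$ positions by typicality), which accounts for the $-p\,h_b(d'/p)-\bar p\,h_b(d'/\bar p)$ term in \eqref{eq:LDefinition}. The substantive new work is part~$i$: bounding $H(M_1,M_2|\bc{z},\tilde{\bc{z}})$ for fixed typical $\bc{z},\tilde{\bc{z}}$ of pair-type with $nd'$ coordinates of each disagreement kind. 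As in \eqref{eq:zeroErrorEq2} one bounds this by $H(\bc{X}\mid\bc{z},\tilde{\bc{z}},\bc{X}\oplus\tilde{\bc{X}}=g(\bc{z},\tilde{\bc{z}}))+H(\bc{Y}\mid\cdots)$. On the agreement coordinates $(0,0)$ and $(1,1)$, knowing $\bc{X}\oplus\tilde{\bc{X}}$ gives no constraint, but strong typicality of $(\bc{X},\bc{Z})$ and $(\tilde{\bc{X}},\tilde{\bc{Z}})$ pins down the number of ones of $\bc{X}$ among those coordinates to $u_1$ (out of $p-d'$) and $u_4$ (out of $\bar p-d'$). On the disagreement coordinates, $\bc{X}\oplus\tilde{\bc{X}}=g$ is fixed, so the number of free $\bc{X}$-patterns is governed by how the ones of $\bc{X}$ distribute over the $w$ coordinates where $g=\cdots$ and the $d'-w$ where $g=\cdots$, giving the $2g\,h_b(u_2/w)+2(d'-w)h_b(u_3/(d'-w))$ terms (the factor $2$ absorbing both disagreement orientations by the even split). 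The linear constraints \eqref{eq:uvLinearC} are then exactly the typicality bookkeeping: $u_1+u_2+u_3$ must equal the total number of $X$-ones among $Z=0$ positions, i.e.\ $p\,P(X=0|Z=0)$ up to $\varepsilon$, and so on; the second constraint couples the $Z=1$ positions of $\bc{X}$ with the value of $g$ via typicality of $\tilde{\bc{X}}$. Taking the maximum over all $(\bc{u},\bc{v},w)$ consistent with \eqref{eq:uvRange}--\eqref{eq:uvLinearC} yields the worst-case entropy bound.

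Assembling the pieces exactly as in part~$iii$: summing over $d$, applying Jensen and concavity of $h_b$ to move $\sum_d d\cdot P(d_H=d)$ to $d_{avg}(\bc{Z})$ inside each $h_b$, and using $d_{avg}(\bc{Z})\le 2p\bar p$ from Lemma~\ref{lem:boundOnHamDis}, gives $\tfrac{1}{n}H(\bc{Z}|M_1,M_2)-\varepsilon\le H(X,Y,Z)-L(p,d_{avg}(\bc{Z}))\le H(X,Y,Z)-\min_{d\le 2p\bar p}L(p,d)$, which together with $H(X|Z)+H(Y|Z)+H(Z)=H(X,Y,Z)$ produces \eqref{eq:generalBound}. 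I expect the main obstacle to be the careful verification, in part~$i$, that the typicality constraints on $(\bc{X},\bc{Z})$ and $(\tilde{\bc{X}},\tilde{\bc{Z}})$ jointly translate into precisely the system \eqref{eq:uvLinearC} and no more --- in particular establishing that $w$, the refinement of the disagreement coordinates by the value of $g$, is genuinely a free parameter (so it must be maximized over) rather than being forced by typicality, and checking that the concavity/Jensen step still applies cleanly to the multi-variable expression $L(p,d)$ when $d$ is averaged.
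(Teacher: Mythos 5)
Your proposal follows essentially the same route as the paper's own proof sketch: reuse \eqref{eq:commonITBounds}, \eqref{eq:zeroErrorEq1}, \eqref{eq:zeroErrorEq2} and Lemmas~\ref{lem:conditionalUniqueIndices}--\ref{lem:boundOnHamDis}, generalize part~$ii$ via the type count $p\,h_b(d'/p)+\bar p\,h_b(d'/\bar p)$, and replace the part-$i$ counting by a joint-type argument over the four sub-blocks of $(\bc{Z},\tilde{\bc{Z}})$ with $w$ refining the disagreement coordinates by the value of $g$ --- which is exactly how the paper arrives at \eqref{eq:uvRange}--\eqref{eq:uvLinearC}, and the hard step you flag (that $w$ and the conditional types are genuinely free up to the linear typicality constraints, so one must maximize over them) is precisely the step the paper itself only asserts. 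One bookkeeping slip in your final assembly: since \eqref{eq:zeroErrorEq1} subtracts $nH(X,Y|Z)$, the intermediate bound should read $\tfrac{1}{n}H(\bc{Z}|M_1,M_2)-\varepsilon\le H(Z)-L(p,d_{avg}(\bc{Z}))$ rather than $H(X,Y,Z)-L(p,d_{avg}(\bc{Z}))$; as written your chain would only yield $R_1+R_2\ge L^*(p)$, but with that correction it gives \eqref{eq:generalBound}.
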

For the probability distribution given by \eqref{eq:probDist}, Definition~\eqref{eq:LDefinition} reduces to the definition of $L(p,d)$ in Theorem~\ref{th:mainResultZero}.

It can be verified that the sum-rate lower bound in  Theorem~\ref{th:mainResultZeroGeneral} is tight in the cases of independent $(X,Y,Z)$, constant $X$, and identical sources $X{=}Y{=}Z$. The optimal sum-rates are equal to  $H(X,Y)$, $H(Y)$, and $0$, respectively.

\begin{proof}[Proof sketch of  Theorem~\ref{th:mainResultZeroGeneral}]
All the steps up to  \eqref{eq:zeroErrorEq2} carry to the general case. We now show how to bound 
\begin{align}
    H\Big(\bc{X}|\bc{Z},\tilde{\bc{Z}},\bc{X}{\oplus} \tilde{\bc{X}}{=}g(\bc{Z},\tilde{\bc{Z}})\Big){+}H\left(\bc{Y}|\bc{Z},\tilde{\bc{Z}},\bc{Y}{\oplus} \tilde{\bc{Y}}{=}g(\bc{Z},\tilde{\bc{Z}})\right) \label{tobound}
\end{align}
for general sources. Pick  strongly typical sequences $\bc{z},\tilde{\bc{z}}\in \mathcal{T}_{\varepsilon}^n(Z)$ such that $d_H(\bc{z}$, $\tilde{\bc{z}})=d$ and consider the function $g(\bc{z}$, $\tilde{\bc{z}})$ given by Lemma~\ref{lem:SameFZeroError}. Note that this  Lemma implies that  $P(\tilde{\bc{X}}=\bc{X}\oplus g(\bc{z}$, $\tilde{\bc{z}})|\bc{z}$, $\tilde{\bc{z}})=1$. Define
\begin{align*}
    \mathcal{A}_X(\bc{z},\tilde{\bc{z}},g(\bc{z},\tilde{\bc{z}})) &{\triangleq} \Big\{ \bc{x}\colon  (\bc{x},\bc{z}),  (\bc{x}\oplus g(\bc{z},\tilde{\bc{z}}),\tilde{\bc{z}})\in \mathcal{T}_{\varepsilon}^n(X,Z)\Big\},\\
    \mathcal{P}_X(\bc{z},\tilde{\bc{z}},g(\bc{z},\tilde{\bc{z}}))&{\triangleq} \Big\{ \hat{p}_{\bc{x}|\bc{z},\tilde{\bc{z}},g(\bc{z},\tilde{\bc{z}})}(\cdot|\cdot)\colon \bc{x} \in \mathcal{A}_X(\bc{z},\tilde{\bc{z}},g(\bc{z},\tilde{\bc{z}})) \Big\}.
\end{align*}
For $\hat{p}(\cdot|\cdot) \in \mathcal{P}_X(\bc{z},\tilde{\bc{z}},g(\bc{z},\tilde{\bc{z}}))$, define
\begin{IEEEeqnarray*}{c}
\alpha_{\hat{p}(\cdot|\cdot)} {\triangleq} \big|\left\{\bc{x} \colon  \bc{x} {\in} \mathcal{A}_X(\bc{z},\tilde{\bc{z}},g(\bc{z},\tilde{\bc{z}})),\hat{p}_{\bc{x}|\bc{z},\tilde{\bc{z}},g(\bc{z},\tilde{\bc{z}})}(\cdot|\cdot)=\hat{p}(\cdot|\cdot) \right\}\big|.
\end{IEEEeqnarray*}

It can be proved that
\begin{IEEEeqnarray}{rl}
    H\Big(\bc{X}|\bc{Z}=\bc{z},\tilde{\bc{Z}}=&\tilde{\bc{z}},\bc{X}\oplus \tilde{\bc{X}}=g(\bc{z},\tilde{\bc{z}})\Big)\nonumber \\
    \leq& \log_2\left(\big|\mathcal{A}_X(\bc{z},\tilde{\bc{z}},g(\bc{z},\tilde{\bc{z}}))\big|\right)+n\varepsilon \nonumber \\
    \leq& \max_{g(\bc{z},\tilde{\bc{z}})} \max_{\substack{\hat{p}(\cdot|\cdot)\\\hat{p}(\cdot|\cdot) \in \mathcal{P}_X(\bc{z},\tilde{\bc{z}},g(\bc{z},\tilde{\bc{z}}))}}\log_2\left(\alpha_{\hat{p}(\cdot|\cdot)}\right)+n\varepsilon.\nonumber
\end{IEEEeqnarray}
Moreover, it can be shown that, in the limit of large $n$, the above maximizers $g^*(\bc{z},\tilde{\bc{z}})$ and $\hat{p}^*(\cdot|\cdot)$ satisfy:
\begin{align}
    \hat{p}_{\bc{z},\tilde{\bc{z}},g^*(\bc{z},\tilde{\bc{z}})}(0,0,1)&=\hat{p}_{\bc{z},\tilde{\bc{z}},g^*(\bc{z},\tilde{\bc{z}})}(1,1,1)=0,\nonumber\\
    \hat{p}_{\bc{z},\tilde{\bc{z}},g^*(\bc{z},\tilde{\bc{z}})}(0,1,0)&=\hat{p}_{\bc{z},\tilde{\bc{z}},g^*(\bc{z},\tilde{\bc{z}})}(1,0,0), \nonumber \\
    \hat{p}^*(0|0,1,0)=\hat{p}^*(0|1,0,0)&,~
    \hat{p}^*(0|0,1,1)=\hat{p}^*(1|1,0,1). \nonumber
\end{align}
A similar upper bound is obtained for $H\big(\bc{Y}|\bc{Z}{=}\bc{z},\tilde{\bc{Z}}{=}\tilde{\bc{z}},\bc{Y}{\oplus} \tilde{\bc{Y}}{=}g(\bc{z},\tilde{\bc{z}})\big)$. It then follows that \eqref{tobound} is upper bounded by the $\max \limits_{\substack{\bc{u},\bc{v},w}}[\ldots]$ term on the right-hand side of  \eqref{eq:LDefinition}, where $w \triangleq \hat{p}_{\bc{z},\tilde{\bc{z}},g(\bc{z},\tilde{\bc{z}})}(0,1,0)$, and
\begin{align*}
     u_1 \triangleq \hat{p}_{\bc{x}|\bc{z},\tilde{\bc{z}},g(\bc{z},\tilde{\bc{z}})}(0|0,0,0),&~ u_2 \triangleq \hat{p}_{\bc{x}|\bc{z},\tilde{\bc{z}},g(\bc{z},\tilde{\bc{z}})}(0|0,1,0),\\
     u_3 \triangleq \hat{p}_{\bc{x}|\bc{z},\tilde{\bc{z}},g(\bc{z},\tilde{\bc{z}})}(0|0,1,1),&~ u_4 \triangleq \hat{p}_{\bc{x}|\bc{z},\tilde{\bc{z}},g(\bc{z},\tilde{\bc{z}})}(0|1,1,0),
\end{align*}
and where $v_1, v_2,v_3,v_4$ are defined similarly but with respect to  $\hat{p}_{\bc{y}|\bc{z},\tilde{\bc{z}},g(\bc{z},\tilde{\bc{z}})}(\cdot|\cdot)$.
\end{proof}

\bibliographystyle{IEEEtran}
\bibliography{references}

\begin{thebibliography}{10}
\providecommand{\url}[1]{#1}
\csname url@samestyle\endcsname
\providecommand{\newblock}{\relax}
\providecommand{\bibinfo}[2]{#2}
\providecommand{\BIBentrySTDinterwordspacing}{\spaceskip=0pt\relax}
\providecommand{\BIBentryALTinterwordstretchfactor}{4}
\providecommand{\BIBentryALTinterwordspacing}{\spaceskip=\fontdimen2\font plus
\BIBentryALTinterwordstretchfactor\fontdimen3\font minus
  \fontdimen4\font\relax}
\providecommand{\BIBforeignlanguage}[2]{{%
\expandafter\ifx\csname l@#1\endcsname\relax
\typeout{** WARNING: IEEEtran.bst: No hyphenation pattern has been}%
\typeout{** loaded for the language `#1'. Using the pattern for}%
\typeout{** the default language instead.}%
\else
\language=\csname l@#1\endcsname
\fi
#2}}
\providecommand{\BIBdecl}{\relax}
\BIBdecl

\bibitem{KornMar79}
J.~K\"orner and K.~Marton, ``How to encode the modulo-two sum of binary sources
  (corresp.),'' \emph{Information Theory, IEEE Transactions on}, vol.~25,
  no.~2, pp. 219 -- 221, March 1979.

\bibitem{SlepWol73}
D.~Slepian and J.~Wolf, ``Noiseless coding of correlated information sources,''
  \emph{Information Theory, IEEE Transactions on}, vol.~19, no.~4, pp. 471 --
  480, July 1973.

\bibitem{CsisKor82}
I.~Csiszár and J.~Körner, \emph{Information Theory: Coding Theorems for
  Discrete Memoryless Systems}, 2nd~ed.\hskip 1em plus 0.5em minus 0.4em\relax
  Cambridge University Press, 2011.

\bibitem{AhlsHan83}
R.~Ahlswede and T.~Han, ``On source coding with side information via a
  multiple-access channel and related problems in multi-user information
  theory,'' \emph{Information Theory, IEEE Transactions on}, vol.~29, no.~3,
  pp. 396 -- 412, May 1983.

\bibitem{SefidgarnGohari15}
M.~{Sefidgaran}, A.~{Gohari}, and M.~R. {Aref}, ``On körner-marton's sum
  modulo two problem,'' in \emph{2015 Iran Workshop on Communication and
  Information Theory (IWCIT)}, 2015, pp. 1--6.

\bibitem{NairWan20}
C.~Nair and Y.~N. Wang, ``On optimal weighted-sum rates for the modulo sum
  problem,'' in \emph{Information Theory Proceedings (ISIT), 2020 IEEE
  International Symposium on}, June 2020.

\bibitem{MattaOster05}
M.~{Mattas} and P.~R.~J. {Ostergard}, ``A new bound for the zero-error capacity
  region of the two-user binary adder channel,'' \emph{IEEE Transactions on
  Information Theory}, vol.~51, no.~9, pp. 3289--3291, 2005.

\bibitem{OrdShay15}
O.~{Ordentlich} and O.~{Shayevitz}, ``A vc-dimension-based outer bound on the
  zero-error capacity of the binary adder channel,'' in \emph{2015 IEEE
  International Symposium on Information Theory (ISIT)}, 2015, pp. 2366--2370.

\bibitem{TripRam16}
A.~{Tripathy} and A.~{Ramamoorthy}, ``On computation rates for arithmetic
  sum,'' in \emph{2016 IEEE International Symposium on Information Theory
  (ISIT)}, 2016, pp. 2354--2358.

\bibitem{TripRam18}
------, ``Zero-error function computation on a directed acyclic network,'' in
  \emph{2018 IEEE Information Theory Workshop (ITW)}, 2018, pp. 1--5.

\end{thebibliography}

\end{document}